\newcommand{\algorithmlabel}[2]{%
\protected@write \@auxout {}{\string \newlabel {#1}{{#2}{}}}}
\newsavebox{\algbox}
\newcommand{\sfrac}[2]{{\ensuremath{\textstyle\frac{#1}{#2}}}}
\newcommand{\half}[0]{\sfrac{1}{2}}
\newcommand{\ket}[1]{\ensuremath{\vert{#1}\rangle}}
\newcommand{\bra}[1]{\ensuremath{\langle{#1}\vert}}
\newcommand{\braket}[2]{\ensuremath{\langle{#1}\vert{}{#2}\rangle}}
\newcommand{\kett}[1]{\ensuremath{\vert{#1}\rangle\!\rangle}}
\newcommand{\abs}[1]{\ensuremath{\vert{#1}\vert}}
\newcommand{\diag}[0]{\ensuremath{\mbox{diag}}}
\newcommand{\comm}[2]{\ensuremath{\gamma_{#1,#2}}}
\newcommand{\Tr}[0]{\ensuremath{\mathrm{Tr}}}
\renewcommand{\Pr}[0]{\ensuremath{\mathrm{Pr}}}
\newtheorem{theorem}{Theorem}[section]
\newtheorem{lemma}[theorem]{Lemma}
\definecolor{gray}{rgb}{0.4,0.4,0.4}
\newcommand{\PauliX}{\mbox{Pauli-X}}
\newcommand{\PauliZ}{\mbox{Pauli-Z}}
\newcommand{\AMatrix}[2]{\ensuremath{A_{{#1},{#2}}}}
\newcommand{\AStarMatrix}[2]{\ensuremath{A_{{#1},{#2}}^{\star}}}
\begin{document}

\title{Model-free readout-error mitigation for quantum expectation values}

\author{Ewout van den Berg}%
\author{Zlatko~K.~Minev}%
\author{Kristan Temme}%
\affiliation{%
 IBM Quantum, T.J.~Watson Research Center\\
 Yorktown Heights, NY, USA
}%

\date{\today}
\begin{abstract}
  Measurements on current quantum processors are subject to hardware
  imperfections that lead to readout errors. These errors manifest
  themselves as a bias in quantum expectation values. Here, we propose
  a very simple method that forces the bias in the expectation value
  to appear as a multiplicative factor that can be measured directly
  and removed at the cost of an increase in the sampling complexity
  for the observable. The method assumes no specific form of the
  noise, but only requires that the noise is `weak' to avoid excessive
  sampling overhead. We provide bounds relating the error in the
  expectation value to the sample complexity.
\end{abstract}

\maketitle

\section{Introduction}

Quantum
algorithms~\cite{peruzzo2014variational,o2016scalable,KAN2017MTTa,larose2019variational,havlivcek2019supervised,schuld2019quantum,mcardle2019variational,mitarai2020theory}
for near-term devices can often be described by the execution of a
reasonably-shallow quantum circuit followed by the measurement of an
observable through sampling. A general assumption for near-term
devices is that proper quantum
error-correction~\cite{SHO1995a,gottesman1997stabilizer,DEV2013MNa,lidar2013quantum}
is not yet available. As a result, noise parameters, such as coherence
time, dictate the maximum depth of a circuit and therefore determine
the size of the calculation that can be performed. Even when working
within these limitations, hardware noise can still affect expectation
values in the form of a bias. Error-mitigation techniques have
therefore been introduced to remove this bias and produce more
accurate expectation values. These techniques come at the additional
cost of repeating the computation, possibly with altered parameters,
increased sampling cost, or additional classical post processing.  For
the mitigation of errors that occur during the application of the
quantum circuit, several schemes have been
proposed~\cite{temme2017error,li2017efficient,bonet2018low,endo2018practical,mcclean2020decoding,arXiv:2011.01382,lowe2020unified,koczor2020exponential,huggins2020virtual}
and implemented experimentally~\cite{Kandala2019,Songeaaw5686}.

In this work, we consider the mitigation of readout errors that occur
during the final measurement step of the computation. We focus on the
computation of the expectation values of Pauli observables. Since
Pauli matrices constitute a Hermitian matrix basis, they can represent
any observable~\cite{PAR2004Ra}.  Moreover, any observable that can be
expanded in a polynomial number of Pauli matrices, such as local
Hamiltonians, can be estimated efficiently by measuring the
expectation values of Pauli observables, due to linearity of the
expectation value.

After a quantum circuit has been applied, we can measure a Pauli
observable. This is done by rotation of the observable to the
computational basis using single-qubit Clifford gates, followed by
measurement in this basis and some basic classical post processing. In
the absence of readout errors, the measurement output of $n$ qubits is
fully described by a probability distribution $p$ over the $2^n$
computational basis states. The standard model for readout errors is
given by a classical noise
map~$A$~\cite{GEL2020a,MAC2019ZOa,Haapasalo2012}, which maps the
noise-free $p$ to the noisy $\tilde{p}$ readout distribution by
$\tilde{p} = Ap$. The readout map $A$ is a $2^n$-by-$2^n$
left-stochastic matrix, where the entry $A_{i,j}$ denotes the
probability of measuring the $i$-th instead of the $j$-th
computational basis state, for $i,j \in \{0,1\}^{n}$.

A direct approach to mitigate the effect of read-out errors has
frequently been to estimate columns of $A$ by measuring the output
frequencies $\hat{r}_x$ for different bit strings $x$ and then to
apply the matrix $A^{-1}$ for
mitigation~\cite{MAC2019ZOa,Steffen1423,KAN2017MTTa}. Explicit
representation and inversion of $A$ is of course feasible only when
the system size is small, or when the noise can be assumed to
factorize such that noise on individual or small groups of qubits can
be modeled and inverted independently. However, experiments have shown
that the noise tends to be correlated~\cite{CHE2019FYWa}, which
invalidates the use of product approximations to the stochastic
matrix. Several approaches have been proposed to deal with this more
difficult scenario~\cite{NAC2019UJBa-arXiv,9259938,KWO2020Ba} as well
as with other
settings~\cite{10.1145/3352460.3358265,HIC2020BNa,KWO2020Ba}. Recently,
a readout error-mitigation scheme for correlated noise with a formal
performance guarantee on sampling overhead depending on the noise
strength was proposed and implemented experimentally
in~\cite{BRA2021SKMa}. In this approach, the noise map does not
need to be explicitly inverted and the model can be concisely
represented using ${\cal O}(\mbox{poly}(n))$ parameters.

Here, we propose a readout-error mitigation method that is motivated
by work on quantum benchmarking
protocols~\cite{ERH2019WPMa,FLA2019Wa-arXiv}.  The method, introduced
in Section~\ref{Sec:Approach}, randomizes the output channel by
uniformly applying random Pauli bit flips prior to measurement, which
are tracked and used in the subsequent analysis. In
Section~\ref{Sec:Derviation} we show that this randomization
transforms the action of an arbitrary noise map $A$ into a single
multiplicative factor per Pauli observable; that is, it diagonalizes
the measurement channel. The multiplicative factors can be measured
directly, in the absence of the quantum circuit. By dividing out this
factor, the bias-free mitigated Pauli expectation value is
obtained. The method does not require a model of the physical
measurement noise and does not make assumptions on the noise
strength. In fact, the method only makes the assumption that the
circuit can be initialized in the zero state.

The shot-noise variance of the mitigated estimate scales inversely
proportional with the magnitude of noise factor. In
Section~\ref{Sec:Analysis}, we show how this magnitude depends on the
underlying noise strength and analyze the required number of
measurement samples to attain a desired estimation
accuracy. Simulations in Section~\ref{Sec:Simulation} indicate that
the method scales reasonably to larger system sizes in the presence of
moderate noise.

\section{Method}\label{Sec:Approach}
Consider a system of $n$ qubits and order the set of Pauli operators
such that $P_q$ denotes a unique Pauli operator indexed by
$q \in \mathcal{P} := [0,4^{n-1}]$. The \PauliZ\ operators are
assigned indices $\mathcal{Z} := [0,2^{n-1})$ and the set of indices
corresponding to \PauliX\ operators is denoted $\mathcal{X}$.  Given
$r,s\in \mathbb{Z}_2^n$ with inner product
$\langle r, s\rangle = \sum_i r_is_i$, we define
$Z^s = \bigotimes_{i=1}^n \sigma_z^{s_i}$ and likewise for $X_s$ with
$\sigma_z$ replaced by $\sigma_x$. Interpreting $s$ as an
integer index, we set $P_s = Z^s$.

Starting from the initial state
\begin{equation}\label{Eq:Rho0}
\rho_0 = \ket{0}\bra{0} = 2^{-n}(I + \sigma_z)^{\otimes n} =
2^{-n}\sum_{j\in\mathcal{Z}} P_{j},
\end{equation}
we want to estimate the \PauliZ\ component $P_i$ in state
$\rho = U\rho_0 U^{\dag}$ obtained by applying operator $U$, namely
\begin{equation}\label{Eq:QST}
\langle P_i\rangle_\rho = \Tr(P_i\rho).
\end{equation}
We assume that the initial state is $\rho_0$ and that all measurements
are performed in the computational basis. This means that we can only
evaluate~\eqref{Eq:QST} for $i\in \PauliZ$. Expectation values for
other Paulis can be obtained by incorporating an appropriate basis
change in operator $U$.

\begin{figure}
\centering
\begin{tabular}{ccc}
\includegraphics[height=16pt]{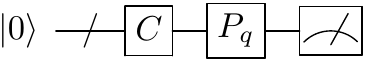}
& \hspace*{22pt} &
\includegraphics[height=16pt]{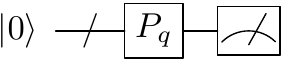}\\[2pt]
({\bf{a}}) & & ({\bf{b}})
\end{tabular}
\caption{Illustration of the (a) measurement and (b) calibration
  circuits  used for the estimation of error-mitigated
  averages. }\label{Fig:Circuit}
\end{figure}

In order to estimate~\eqref{Eq:QST} we run various instances of the
circuit given in Figure~\ref{Fig:Circuit}(a). The circuit is
parameterized by Pauli index $q$, and unitary $C$ (or the circuit
that implements it). Choosing $C$ to be the identity results in a
simplified circuit, as shown in Figure~\ref{Fig:Circuit}(b). The Pauli
index $q$ will be sampled from some index set $\mathcal{S}$ to be
specified later. The protocol to acquiring measurement
outcomes for $N$ circuit instances is given by:

\medskip
\begin{algorithmic}[1]
    \Statex \hspace*{-12pt}{{\bf{Protocol}} {\bf{AcquireData}}($\mathcal{S}$, $C$, $N$)}
    \State Initialize an empty data set $\mathcal{D}$
    \For{$i = 1,\ldots,N$}
       \State Uniformly sample $q$ from index set $\mathcal{S}$
       \State Execute the circuit in Figure~\ref{Fig:Circuit}(a) with Pauli
       \Statex \hspace*{15pt}$P_q$ and unitary $C$\vspace*{1pt}
       \State Record the measurement outcome $x$ and add
       \Statex \hspace*{15pt}tuple $(q,x)$ to $\mathcal{D}$
    \EndFor
    \State \textbf{return} $\mathcal{D}$
\end{algorithmic}

\medskip\noindent Each measurement outcome is represented by an
element $x \in \mathbb{Z}_2^n$.
In classical post-processing of the acquired data, we use the function
\begin{equation}\label{Eq:Processing}
f(\mathcal{D},s) =
\frac{1}{\vert\mathcal{D}\vert}\sum_{(q,x)\in\mathcal{D}}
\comm{s}{q}(-1)^{\langle s, x\rangle},
\end{equation}
where $\comm{a}{b}$ has the value $1$ if Paulis $P_a$ and $P_b$
commute and the value $-1$ otherwise (these sign changes with respect
can be omitted if we flip measurement bits according to the sampled
$q$ value).  The protocol for estimating $\langle Z^s\rho)$ is then as follows:\\

{\bf{Protocol 1}}\\[-18pt]
\begin{enumerate}\setlength{\itemsep}{-0.25pt}
\item $\mathcal{D}_0 = \mbox{{AcquireData}}(\mathcal{X},I,N)$
\item $\mathcal{D}_1 = \mbox{{AcquireData}}(\mathcal{X},U,N)$
\item Return estimate $f(\mathcal{D}_1,s) / f(\mathcal{D}_0,s)$
\end{enumerate}
\smallskip

\noindent Note that the data acquired in steps 1 and 2 can be reused
to evaluate the quantity in step 3 for different values of
$s$. Moreover, the data from step 1 is independent of $U$ and can
therefore be used in error mitigation of measurement of other states
as well. For simplicity we set the number of samples in each of the
two data sets to $N$. More generally we could choose different numbers
of samples for each of these steps.

\section{Derivation}\label{Sec:Derviation}

Ideal measurements in the computational basis can be written in terms
of positive operator-valued measures $E_x = \ket{x}\bra{x}$ for
$x\in\mathbb{Z}_2^n$. We assume that
measurements are affected by a noise map $A$, such that measurement
$y$ can be misinterpreted as $x$ with probability $\AMatrix{x}{y} =
\bra{x}A\ket{y}$. Using this, we can define noisy measures $\tilde{E}_x =
\sum_y \AMatrix{x}{y}\ket{y}\bra{y}$. Now, for $s\in\mathbb{Z}_2^n$, define
\begin{align*}
X_s &:= \sum_a \ket{a+s}\bra{a} = \sum_a \ket{a}\bra{a+s} =
X_s^{\dag},\ \mbox{and}\\
Z_s &:= \sum_a (-1)^{\langle s, a\rangle}\ket{a}\bra{a}.
\end{align*}
We would like to estimate the expectation value
\begin{equation}\label{Eq:IdealZs}
\langle Z_w\rangle_{\rho} = \Tr(Z_w\rho) = \sum_{x\in\mathbb{Z}_2^n} (-1)^{\langle w,x\rangle}\Tr(E_x\rho).
\end{equation}
Substituting $E_x$ by $\tilde{E}_x$ gives an unmitigated noisy
estimate $\langle\tilde{Z}_s\rangle_{\rho}$. In order to mitigate the
readout error, our algorithm applies a random bit flips prior to
measurement, and then either applies the same bit flips directly after
the (noisy) measurement, or equivalently adjust signs in the
estimation of the expectation value. The random bit flip, obtained by
applying $X_s$ for a randomly sampled $s$, can be applied before the
noisy measurement, or directly after an ideal measurement but just
prior to the noise map $A$. Using the latter view, we can
define the twirled noise map $A^{\star}$ as
\begin{align*}
A^{\star} &:= \frac{1}{2^n}\sum_s X_s AX_s^{\dag}
= \frac{1}{2^n}\sum_s\sum_{a,b}
 \AMatrix{a}{b} X_s \ket{a}\bra{b}X_s^{\dag}\\
& =\frac{1}{2^n}\sum_s\sum_{a,b} \AMatrix{a}{b} \ket{a+s}\bra{b+s},
\end{align*}
with associated measure
$\tilde{E}_x^{\star} =
\sum_y \AStarMatrix{x}{y}\ket{y}\bra{y}$. Substitution
in~\eqref{Eq:IdealZs} then gives us the twirled noisy expectation
\begin{align}
\langle \tilde{Z}_w^{\star}\rangle_{\rho}
&:= \sum_{x\in\mathbb{Z}_2^n}
 (-1)^{\langle w,x\rangle}\Tr(\tilde{E}_x^{\star}\rho)\notag\\
&
= \sum_{x,y} (-1)^{\langle w,x\rangle}
\bra{x}A^{\star}\ket{y}\Tr(\ket{y}\bra{y}\rho)
\label{Eq:TwirledZs}
\end{align}
In order to simplify this, first define
\[
\ket{v_w} = \sum_x (-1)^{\langle w,x\rangle}\ket{x}.
\]
We then have
\begin{align*}
A^{\star}\ket{v_w}
& =\frac{1}{2^n}\sum_{s,x}\sum_{a,b} (-1)^{\langle w,x\rangle}\AMatrix{a}{b}
\ket{a+s}\braket{b+s}{x}
\\
& \hspace*{-11pt}\stackrel{(s=b+x)}{=}\frac{1}{2^n}\sum_{x}\sum_{a,b}
     (-1)^{\langle w,x\rangle}\AMatrix{a}{b}
\ket{a+b+x}\\
& = \frac{1}{2^n}\sum_{x}\sum_{a,b} (-1)^{\langle w,x+a+b\rangle}\AMatrix{a}{b}
\ket{x}\\
&= \lambda_w \ket{v_w},
\ \mbox{with}\ \lambda_w = \frac{1}{2^n}\sum_{a,b} (-1)^{w,a+b}\AMatrix{a}{b}.
\end{align*}
In other words, $\ket{v_w}$ is an (unnormalized) eigenvector of
$A^{\star}$ with corresponding eigenvalue $\lambda_w$. We therefore
have
$\bra{v_w}A^{\star} = \lambda_w\bra{v_w}$, and it immediately follows
that we can rewrite~\eqref{Eq:TwirledZs} as
\begin{align*}
\langle \tilde{Z}_w^{\star}\rangle_{\rho}
& = \bra{v_w}A^{\star}\sum_y \ket{y}\Tr(\ket{y}\bra{y}\rho)
\\
& = \lambda_w\bra{v_w}\sum_y \ket{y}\Tr(\ket{y}\bra{y}\rho)
\\
& = \lambda_w\sum_{x,y}(-1)^{\langle w,x\rangle}\braket{x}{y}\Tr(\ket{y}\bra{y}\rho)
\\
& = \lambda_w\sum_{x}(-1)^{\langle w,x\rangle}\Tr(\ket{x}\bra{x}\rho)
\\
& = \lambda_w \langle Z_w\rangle_{\rho}.
\end{align*}
For the initial state $\rho = \ket{0}\bra{0}$ we have
$\langle Z_w\rangle_{\rho} = 1$ and therefore
$\langle \tilde{Z}_w^{\star}\rangle_{\rho} = \lambda_w$.  The protocol
estimates this quantity, and then uses it to obtain noise-mitigated
estimates $\langle Z_w\rangle_{\rho}$ for other values of $\rho$.

\subsection{Alternative derivation}

As an alternative derivation, consider the super-operator
representation of the state in the Pauli basis as $\kett{\rho}$. In
the case of ideal measurements in the computational basis, the readout
probabilities are given by projection operator $[H_n^{-1}, 0]$ with
$H_n = H^{\otimes n}$, where $H$ the unnormalized Hadamard matrix
$X+Z$. The noisy readout probabilities are then given by the vector
\[
A \left[H_n^{-1}, 0\right]\kett{\rho}.
\]
Conversion of binary measurements to Pauli-Z observables is done
through the Walsh-Hadamard transformation. By appropriately ordering
the Pauli-Z operators, this amounts to multiplication by $H_n$. The
vector of Pauli-Z observable expectation values is then given by
\[
H_nA \left[H_n^{-1}, 0\right]\kett{\rho} = \left[ H_nAH_n^{-1}, 0\right]\kett{\rho}
= \left[ M, 0\right]\kett{\rho}
\]
with $M = H_nAH_n^{-1}$.
For the proposed error-mitigation scheme we add a random Pauli-X
operator $P_q$ prior to measurement and appropriate sign changes to the
estimate. The effect of this is multiplication with diagonal matrices
$D_q$ and $D_q'$ with diagonal elements $\comm{v}{q} = (-1)^{\langle v,q\rangle}$ for $v$ in
$\mathcal{Z}$ and $\mathcal{P}\setminus\mathcal{Z}$, respectively:
\[
D_q\left[ M, 0\right]\left[\begin{array}{cc}D_q & 0\\0 & D_q'\end{array}\right]\kett{\rho}
= \left[ D_qM D_q, 0\right]\kett{\rho}
\]
Multiplication from the left and right by a diagonal matrix
$\mathrm{diag}(d_q)$ amounts to elementwise multiplication with matrix
$d_qd_q^T$. The expectation of these matrices over the Pauli-X group
satisfies $\mathbb{E}_{q\in\mathcal{X}} [d_qd_q^T] = I$. The
expected observable vector is therefore given by 
\[
\left[M\odot I, 0\right]\kett{\rho}.
\]
We can determine $\lambda = \mathrm{diag}(M)$ by multiplying with
$\kett{\rho_0}$, which is a vector whose first $2^n$ elements are
one and all remaining elements are zero. Once we know $\lambda$ we can
easily divide it out to obtain unbiased Pauli-Z estimates.



\section{Analysis}\label{Sec:Analysis}

\subsection{Sample complexity}

Protocol 1 estimates
$\langle \tilde{Z}^s\rangle_{\rho}$ as $f(\mathcal{D}_1,s)/f(\mathcal{D}_0,s)$, which is of the form
$\hat{x}/\hat{y}$. We now consider the sample complexity of the
protocol: what value of $N$ we should choose, such that with
probability at least $1-\delta$ the final estimate deviates at most
$\epsilon$ from the exact value? Before doing so, we first consider
the accuracy of the estimate in the case we can estimate $x$ and $y$
up to an additive error of at most $\alpha$.
\begin{lemma}\label{Lemma:FractionXY}
  Let $x,y$ be such that $0 \leq \abs{x} \leq \abs{y} \leq 1$. Given
  estimates $\hat{x},\hat{y}$ with $\abs{x-\hat{x}} \leq \alpha$ and
  $\abs{y-\hat{y}} \leq \alpha$, such that
  $0 \leq \alpha \leq \abs{y}/2$. Then
\[
\left\vert\frac{\hat{x}}{\hat{y}}-\frac{x}{y}\right| \leq \frac{4\alpha}{y}.
\]
\end{lemma}
\begin{proof}
  Assume without loss of generality that $x,y\geq 0$. Taking the
  Taylor-series expansion around zero for sufficiently small $\alpha$
  we have in the worst case that
\begin{align*}
\frac{\hat{x}}{\hat{y}} &= \frac{x+\alpha}{y-\alpha}
=\frac{x}{y} + \left(1+\frac{x}{y}\right)\sum_{k=1}^{\infty}\left(\frac{\alpha}{y}\right)^k\\
&\leq
\frac{x}{y} + \left(1+\frac{x}{y}\right)\cdot\left(\frac{1}{1-\alpha/y}-1\right)\\
&\leq 
\frac{x}{y} + \left(1+\frac{x}{y}\right)\cdot\left(\frac{\alpha/y}{1-\alpha/y}\right)
\leq \frac{x}{y} + \frac{4\alpha}{y}.
\end{align*}
In the last inequality we use the fact that $x/y\leq 1$, and
$1-\alpha/y \geq 1/2$. A lower bound can be derived similarly to
obtain the given result.
\end{proof}

\noindent
With this we can obtain the following sample complexity:

\begin{theorem}
  With probability at least $1-\delta$, protocol 1
  estimates~\eqref{Eq:QST} with error at most $\epsilon$ for a fixed
  $i\in\mathcal{Z}$ when the number of samples $N$ satisfies
\[
N \geq \frac{32\log(4/\delta)}{M_{i,i}^2\epsilon^2}.
\]
\end{theorem}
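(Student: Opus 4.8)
The plan is to combine the deterministic error-propagation bound of Lemma~\ref{Lemma:FractionXY} with a concentration argument for the two empirical estimators $\hat{x} = f(\mathcal{D}_1,s)$ and $\hat{y} = f(\mathcal{D}_0,s)$. First I would identify the exact quantities being estimated: from the derivation in Section~\ref{Sec:Derviation}, $\hat{y}$ estimates $\lambda_w = M_{i,i}$ (the calibration run on $\rho_0$), while $\hat{x}$ estimates $\lambda_w \langle Z_w\rangle_\rho = M_{i,i}\langle P_i\rangle_\rho$, so that the ratio $\hat{x}/\hat{y}$ targets $\langle P_i\rangle_\rho$, the quantity in~\eqref{Eq:QST}. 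To invoke the lemma I must check its hypotheses hold with $y = M_{i,i}$ and $x = M_{i,i}\langle P_i\rangle_\rho$: since $\lvert\langle P_i\rangle_\rho\rvert \le 1$, we indeed have $\lvert x\rvert \le \lvert y\rvert$, and the normalization gives $\lvert y\rvert \le 1$.

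The key step is bounding the additive error $\alpha$ on each estimator with high probability. Each summand in the post-processing function~\eqref{Eq:Processing} is of the form $\comm{s}{q}(-1)^{\langle s,x\rangle} \in \{-1,+1\}$, so $f(\mathcal{D},s)$ is an average of $N$ independent, bounded random variables lying in $[-1,1]$. I would therefore apply Hoeffding's inequality to each of $\hat{x}$ and $\hat{y}$ separately: for range-$2$ variables, $\Pr(\lvert f(\mathcal{D},s) - \mathbb{E} f\rvert \ge \alpha) \le 2\exp(-N\alpha^2/2)$. Demanding that each estimator fail with probability at most $\delta/2$ and taking a union bound over the two estimators yields the requirement $2\exp(-N\alpha^2/2) \le \delta/2$, i.e. $\alpha = \sqrt{2\log(4/\delta)/N}$ suffices so that $\lvert x-\hat{x}\rvert \le \alpha$ and $\lvert y - \hat y\rvert \le \alpha$ simultaneously hold with probability at least $1-\delta$.

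Granting that good event, Lemma~\ref{Lemma:FractionXY} gives the final error $\lvert \hat{x}/\hat{y} - x/y\rvert \le 4\alpha/y = 4\alpha/M_{i,i}$. Setting this at most $\epsilon$ gives $\alpha \le \epsilon M_{i,i}/4$; substituting the Hoeffding expression $\alpha = \sqrt{2\log(4/\delta)/N}$ and solving for $N$ yields $N \ge 32\log(4/\delta)/(M_{i,i}^2\epsilon^2)$, which is exactly the claimed bound. I would also note that the lemma's side condition $\alpha \le \lvert y\rvert/2 = M_{i,i}/2$ is met once $\epsilon \le 2$, which is automatic since $\epsilon$ is an error in an expectation value of magnitude at most one.

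The main obstacle is the subtlety that the ratio estimator $\hat{x}/\hat{y}$ is biased and its numerator and denominator must be controlled jointly rather than through a naive variance calculation; the lemma is precisely what lets me sidestep this by converting the two individual additive guarantees into a clean multiplicative error bound, provided I am careful that the constant factor of $4$ from the lemma propagates correctly into the $32$ in the final sample count. A secondary point to verify is that the two data sets are independent, so the union bound over the two failure events is legitimate, and that the mapping from the scalar $\lambda_w$ to the matrix entry $M_{i,i}$ of the alternative derivation is consistent, since the bound is stated in terms of $M_{i,i}$.
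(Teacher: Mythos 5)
Your proposal is correct and follows essentially the same route as the paper: Hoeffding's inequality applied to the $\pm 1$ summands of $f(\mathcal{D},s)$, a union bound over the numerator and denominator giving the $\log(4/\delta)$ factor, and then Lemma~\ref{Lemma:FractionXY} with $y = M_{i,i}$ to convert the additive accuracy $\alpha = \epsilon M_{i,i}/4$ into the stated sample count. You are in fact slightly more careful than the paper, explicitly verifying the lemma's hypotheses $\lvert x\rvert \le \lvert y\rvert \le 1$ and the side condition $\alpha \le \lvert y\rvert/2$, which the paper's proof leaves implicit.
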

\begin{proof}
  Protocol 1 acquires data and estimate different quantities
  using the function in~\eqref{Eq:Processing}. For a fixed $i$ and
  $j$, we can view each term in the summation as an independent
  $\pm 1$ sample from a certain distribution depending on $U$ that
  marginalizes over Pauli indices $p$ and $q$. For the error in the
  estimated quantities, we therefore apply Hoeff\-ding's inequality,
  which states that, given independent random variables $X_i$ from any
  distribution over $[-1,1]$, the deviation of
  $\bar{X} = N^{-1}\sum_{i=1}^N X_i$ to the expected value
  $\mathbb{E}(X)$ satisfies
\begin{equation}\label{Eq:HoeffdingA}
\Pr\left(\Big\vert \bar{X} - \mathbb{E}(X)\Big\vert \geq
  \alpha\right) \leq 2\exp\left(-\half N\alpha^2\right).
\end{equation}
We want to ensure that probability of deviating from the expectation
by $\alpha$ or more, is bounded by $\delta/2$. Using the union bound
it then follows that the enumerator and denominator are $\alpha$ close
to their expectation with probability at least $1-\delta$. Bounding
the failure probability in~\eqref{Eq:HoeffdingA} from above by
$\delta/2$ gives the sufficient condition
\begin{equation}\label{Eq:IntermediateNValue}
N \geq \frac{2\log(4/\delta)}{\alpha^2}.
\end{equation}
We now need to choose $\alpha$ such that the final estimate is
$\epsilon$ accurate. From Lemma~\ref{Lemma:FractionXY} we see that it
suffices to take $4\alpha / y \leq \epsilon$, where
$y = f_I(i) = M_{i,i}$. Substituting $\alpha = \epsilon M_{i,i}/4$
in~\eqref{Eq:IntermediateNValue} then gives the desired result.
\end{proof}

As discussed in more detail in Section~\ref{Sec:ExampleMatrices}, the
term $M_{i,i}$ is expected to scale weakly exponential in the weight
of the \PauliZ\ observable with a base that deviates from unity by the
magnitude of the noise. The increase in sampling complexity therefore
depends on the strength of the noise similar to the
quasi-probabilistic noise cancellation method in
ref.~\cite{temme2017error}.

\subsection{Number of circuit instances}

\begin{figure}
\centering
\begin{tabular}{cccc}
\includegraphics[width=0.21\columnwidth]{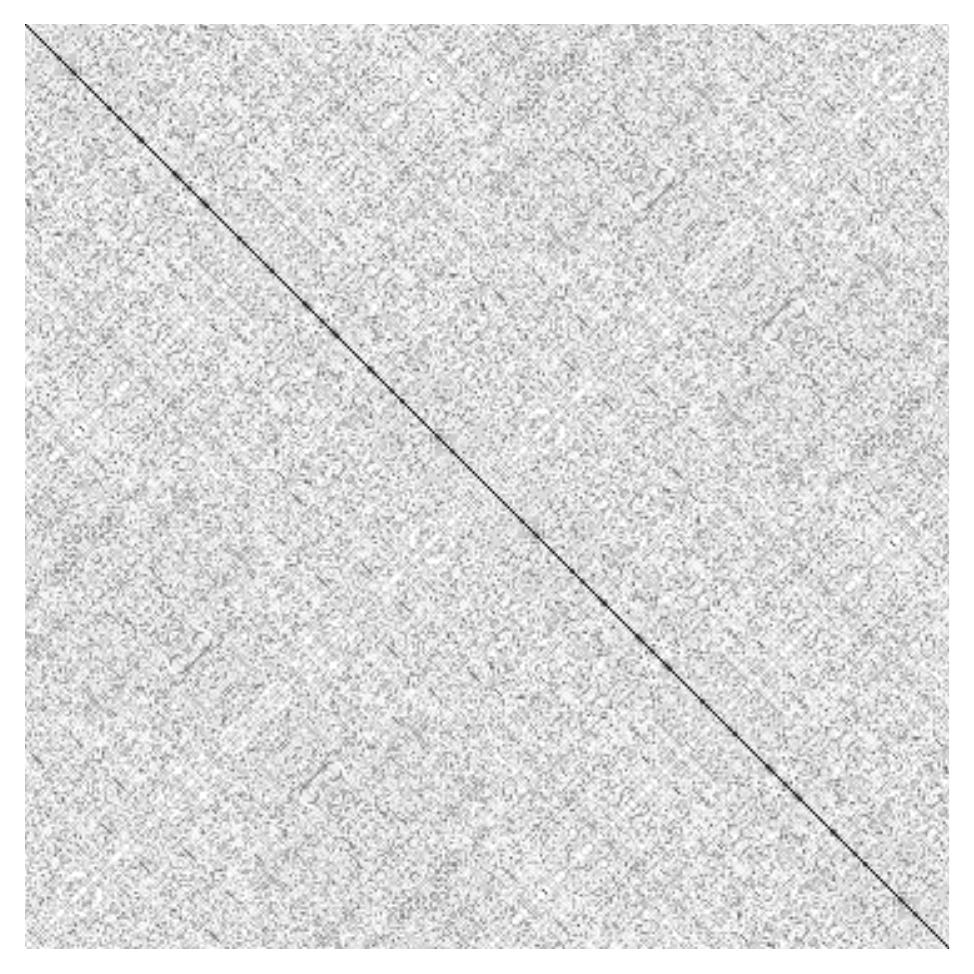}&
\includegraphics[width=0.21\columnwidth]{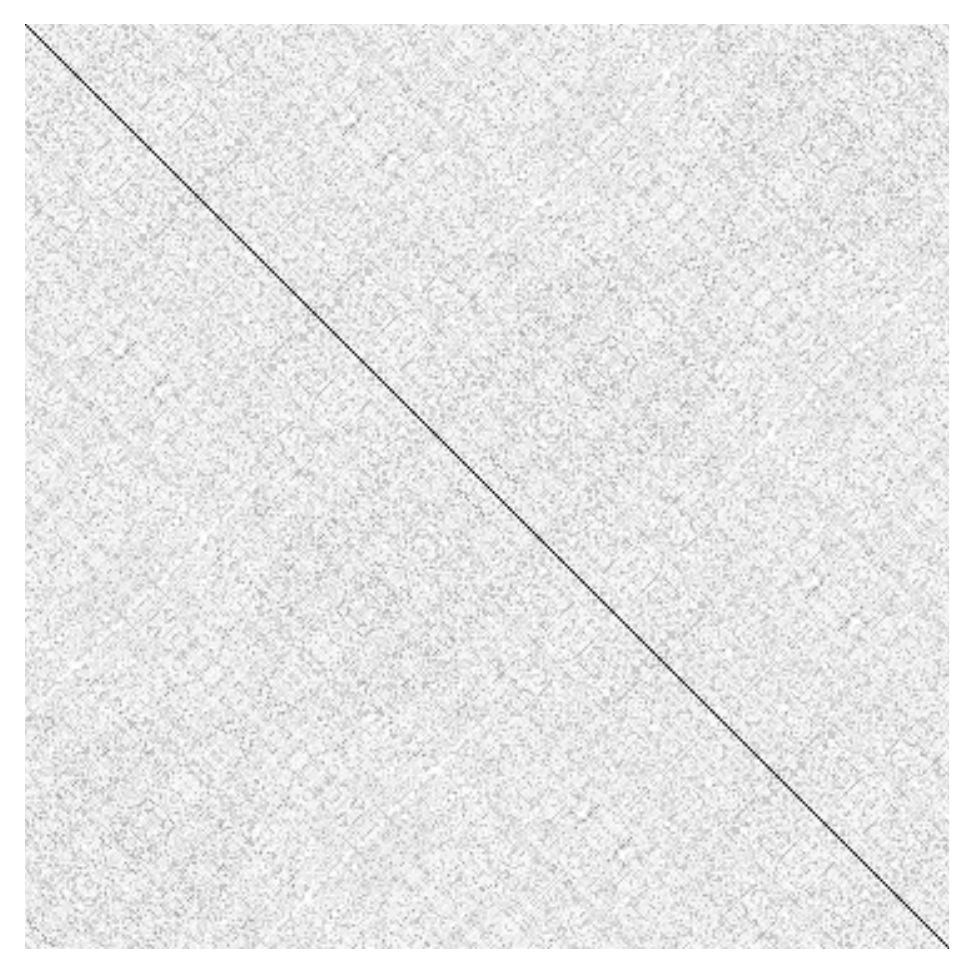}&
\includegraphics[width=0.21\columnwidth]{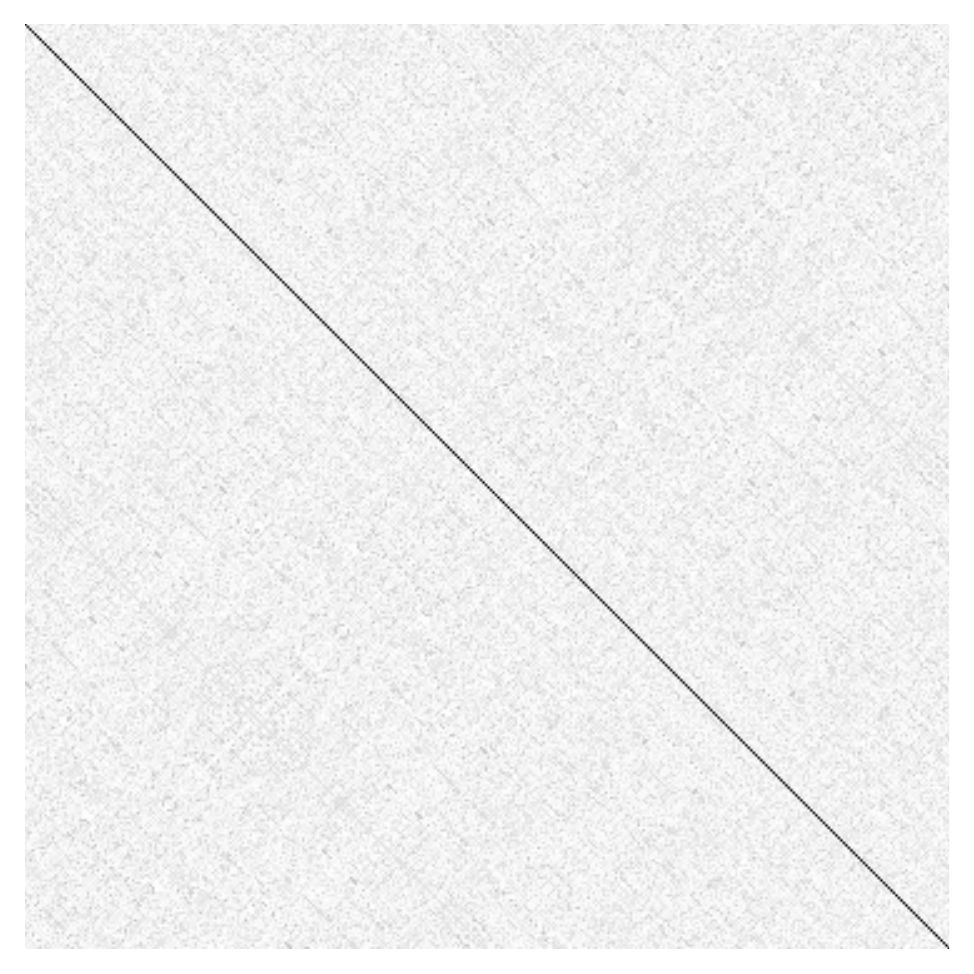}&
\includegraphics[width=0.21\columnwidth]{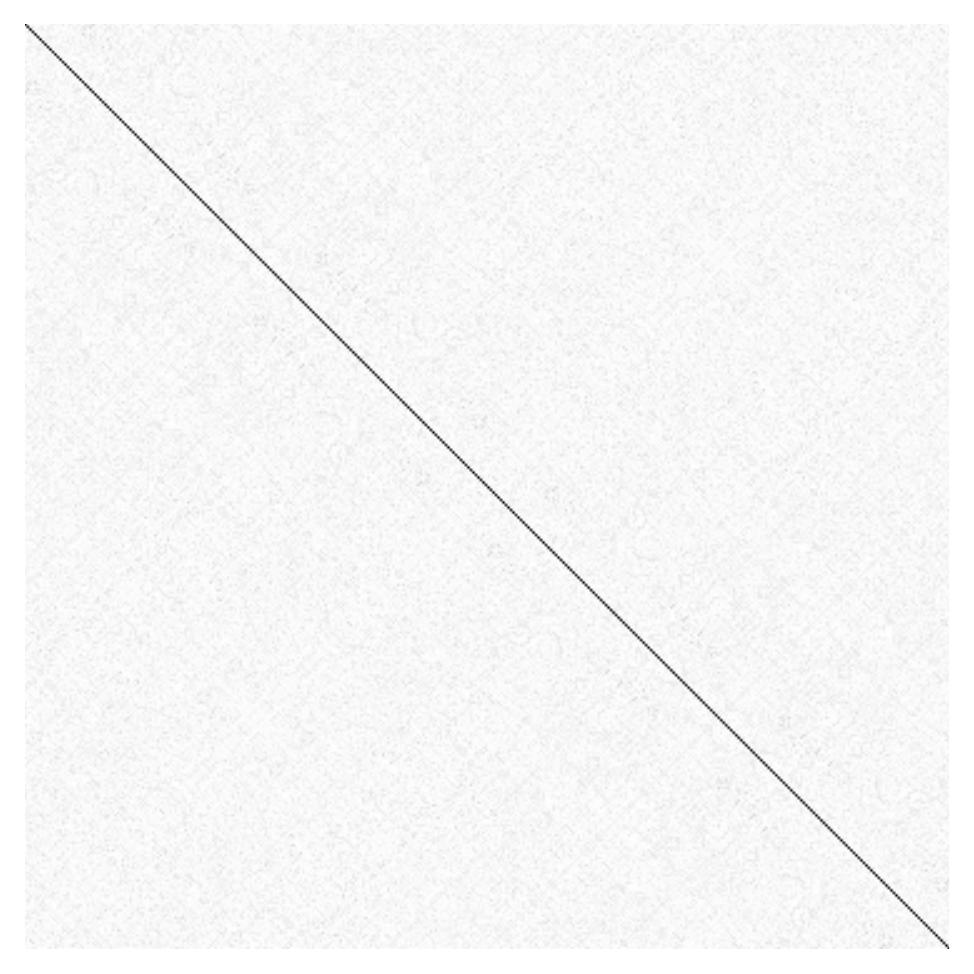}
\end{tabular}
\caption{Diagonalization masks for 12 qubits, obtained by averaging the
  outer products of commutation vectors $d_q$ using, from left to
  right, 30, 100, 1000, and 3000 random
  $q\in\mathcal{X}$.}\label{Fig:Diagonalization}
\end{figure}

For a given $q$, the term $D_q M D_q$ can be written as the
elementwise product of $M$ and the outer product $d_qd_q^T$. The outer
product has the important property that diagonal elements are always
one, irrespective of the signs in $d_q$. For randomly sampled
$q\in\mathcal{X}$ or $q\in \mathcal{P}$, each off-diagonal value has
an equal chance of being plus or minus one, and therefore has an
expected value of zero. When the number of qubits $n$ is small we can
iterate over all possible $q$ values and obtain exact diagonalization
of $M$. For larger values of $n$ this becomes intractable and we can
therefore only approximately diagonalize $M$, as illustrated in
Figure~\ref{Fig:Diagonalization}. For calibration we estimate
$e_i M \mathbf{1} = M_{i,i} + e_iM(\mathbf{1}-e_i)$. In order to
suppress the second term we need to sample sufficiently many circuits
instances. For the actual estimate of $M_{i,i}$ itself we simply need
to sample sufficiently many times regardless of the circuit instance.
At this point we should also remark that the actual procedure depends
on measurements from the probability vector
$\tilde{p}=A[H^{-1},0]D_q\kett{\rho_0}$ and we therefore need to
sample each circuit sufficiently many times.

In this section we give bounds on the number of circuit instances we
need to estimate $M_{i,i}$ to a given accuracy. Given that we multiply
by an approximately diagonal mask, this bound depends in part on the
maximum off-diagonal elements in $M$. We show how this corresponds to
properties of the transfer matrix $A$ and study these properties for
different types of transfer matrices.
%
Our final estimates are given by the ratio of two quantities and we
therefore consider how the estimation error in these quantities affect
the result. Note that through this section we work with the full
matrix representation for clarify; as shown in
Section~\ref{Sec:Approach}, processing itself is done based on
individual elements.

For the the number of circuit instances we need to approximately
diagonalize $M$, we have the following result:

\begin{theorem}\label{Thm:SampleComplexity}
  Given $k$ randomly sampled values $q_1,\ldots,q_k \in\mathcal{X}$
  and an index set $\mathcal{I} \subseteq [2^n]$. Define
\[
\hat{M} = \frac{1}{k}\sum_{\ell=1}^k (D_{q_{\ell}}M D_{q_{\ell}})
\quad\mbox{and}\quad
\beta = \max_{i\in\mathcal{I}}\sum_{ j\neq i}\vert M_{i,j}\vert.
\]
Then we satisfy
$\vert e_i \hat{M} (\mathbf{1}-e_i)\vert \leq \epsilon$ and
$\vert \hat{M}_{i,i} - M_{i,i}\vert \leq \epsilon$ simultaneously for
all $i\in\mathcal{I}$ with probability at least $1-\delta$ whenever
\begin{equation}\label{Eq:SampleComplexity}
k \geq 2\frac{\log(2/\delta) + n\log(2) + \log(\vert\mathcal{I}\vert)}{\epsilon^2/(1+\beta)^2}.
\end{equation}
\end{theorem}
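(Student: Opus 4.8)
The plan is to establish both bounds entrywise: I would control each off-diagonal entry of $\hat M$ by a single Hoeffding tail bound, note that the diagonal is reproduced exactly, and then close with a union bound over positions. The starting point is the entrywise action of the twirl. Since $D_q$ is diagonal with $(D_q)_{vv}=\comm{v}{q}=(-1)^{\langle v,q\rangle}$, we have
\[
(D_q M D_q)_{i,j}=(-1)^{\langle i,q\rangle}M_{i,j}(-1)^{\langle j,q\rangle}=(-1)^{\langle i+j,q\rangle}M_{i,j}.
\]
For $j=i$ the sign is $1$ for every $q$, so $\hat M_{i,i}=M_{i,i}$ identically and the condition $\abs{\hat M_{i,i}-M_{i,i}}\le\epsilon$ holds trivially with zero error, independent of $k$. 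For $j\ne i$ the index $i+j$ is nonzero, so averaging the sign over the uniformly sampled $q\in\mathcal X$ gives $\mathbb E[\hat M_{i,j}]=0$. Only the off-diagonal fluctuations therefore require sampling.

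Next, for each off-diagonal pair $(i,j)$ I would write $\hat M_{i,j}=M_{i,j}\,S_{i,j}$ with the sign average $S_{i,j}=\frac1k\sum_{\ell=1}^k(-1)^{\langle i+j,q_\ell\rangle}$. As the $q_\ell$ are independent, the summands are i.i.d.\ $\pm1$ variables in $[-1,1]$ of mean zero, so \eqref{Eq:HoeffdingA} applies directly. Taking the deviation $\alpha=\epsilon/(1+\beta)$ yields
\[
\Pr\!\left(\abs{S_{i,j}}\ge\frac{\epsilon}{1+\beta}\right)\le 2\exp\!\left(-\frac{k\epsilon^2}{2(1+\beta)^2}\right),
\]
whose exponent is identical for every entry. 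The induced control on $\abs{\hat M_{i,j}}=\abs{M_{i,j}}\,\abs{S_{i,j}}$ is then proportional to $\abs{M_{i,j}}$, which is precisely what lets the entrywise bounds add up across a row.

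Then I would union bound over the at most $2^n\abs{\mathcal I}$ off-diagonal positions with $i\in\mathcal I$. Requiring a total failure probability of at most $\delta$ gives $2^n\abs{\mathcal I}\cdot 2\exp(-k\epsilon^2/(2(1+\beta)^2))\le\delta$, which after taking logarithms rearranges to exactly the stated lower bound on $k$, the three terms $\log(2/\delta)$, $n\log 2$, and $\log\abs{\mathcal I}$ arising from $\delta$, the $2^n$ columns, and the $\abs{\mathcal I}$ rows respectively. On the complementary event every $\abs{S_{i,j}}\le\epsilon/(1+\beta)$, so for each $i\in\mathcal I$
\[
\abs{e_i\hat M(\mathbf 1-e_i)}\le\sum_{j\ne i}\abs{\hat M_{i,j}}\le\frac{\epsilon}{1+\beta}\sum_{j\ne i}\abs{M_{i,j}}\le\frac{\epsilon\,\beta}{1+\beta}\le\epsilon,
\]
using $\sum_{j\ne i}\abs{M_{i,j}}\le\beta$ in the penultimate step. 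Combined with the exact diagonal identity, both conditions hold simultaneously for all $i\in\mathcal I$.

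The one point that needs care is the choice of threshold. A union bound over $2^n\abs{\mathcal I}$ events is efficient only if a single Hoeffding exponent covers every entry, yet the resulting per-entry bounds must still telescope to at most $\epsilon$ when summed over a row. Normalizing by $1+\beta$ reconciles these demands: it is at least $\sum_{j\ne i}\abs{M_{i,j}}$ for every $i\in\mathcal I$, so the off-diagonal row sum is controlled, and any such positive upper bound that remains well-defined as $\beta\to 0$ (where $M$ is already diagonal and no averaging is needed) would serve equally well. Everything else reduces to the already-stated Hoeffding inequality, so this normalization is really the only substantive step.
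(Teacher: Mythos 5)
Your proof is correct and uses the same core machinery as the paper's: Hoeffding's inequality applied to the $\pm 1$ sign averages $S_{i,j}=\frac{1}{k}\sum_\ell(-1)^{\langle i+j,q_\ell\rangle}$, a union bound over the roughly $2^n\vert\mathcal{I}\vert$ off-diagonal positions, and the $\epsilon/(1+\beta)$ threshold, which together reproduce the stated bound on $k$ exactly. The one genuine difference is the treatment of the diagonal. You observe that $(D_qMD_q)_{i,i}=M_{i,i}$ for every $q$, so $\hat M_{i,i}=M_{i,i}$ identically, the condition $\vert\hat M_{i,i}-M_{i,i}\vert\le\epsilon$ costs nothing, and all of the probability budget goes to the off-diagonal entries (your final row bound is in fact the slightly stronger $\epsilon\beta/(1+\beta)\le\epsilon$). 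The paper instead spends a separate Hoeffding term on the diagonal with deviation $\epsilon_a$ and imposes the combined constraint $\epsilon_a+\epsilon_b\beta\le\epsilon$, which is what forces the choice $\epsilon_a=\epsilon_b=\epsilon/(1+\beta)$; that accounting anticipates the practical setting discussed around the theorem, where $M$ is only accessed through finitely many measurement shots and the diagonal itself fluctuates, but for the theorem as literally stated (with $\hat M$ a deterministic function of the $q_\ell$ and the exact $M$) that term is vacuous and your version is the cleaner and tighter one. Both routes land on the same sample complexity because with the equal split the paper's diagonal term is just one more summand absorbed into the same $2\cdot 2^n\vert\mathcal{I}\vert$ union bound.
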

\begin{proof}
Let $i$ be any element in $\mathcal{I}$.
For scaling of the off-diagonal elements we uniformly sample $X$ from
$\{-1,1\}$. If we can ensure that each element is scaled by a factor
at most $\epsilon_b$, then we have an additive term with magnitude at
most $\epsilon_b\beta$ in the estimation of $M_{i,i}$. For the
estimation of $M_{i,i}$ itself, we apply~\eqref{Eq:HoeffdingA} with
$X$ following an appropriate distribution on $[-1,1]$ and a maximum
deviation of $\epsilon_a$. Using a union bound over the off-diagonal
elements in the row we obtain the condition
\begin{equation}\label{Eq:ConditionDelta}
2\exp(-\half k\epsilon_a^2) + 2(2^n-1)\exp(-\half k\epsilon_b^2) \leq \delta
\end{equation}
In case $\beta = 0 $ we can choose $\epsilon_a = \epsilon$ and let
$\epsilon_b \to \infty$. Using a union bound over the rows in
$\mathcal{I}$ gives a sufficient number of circuit instances of
\[
k \geq \frac{\log(2/\delta) + \log(\vert\mathcal{I}\vert)}{\epsilon^2/2}.
\]
For the more general case where $\beta \neq 0$ we choose
$\epsilon_a = \epsilon_b$, which reduces condition~$\eqref{Eq:ConditionDelta}$
to
\begin{equation}\label{Eq:UnionBoundA}
2^{n}\exp(-\half k\epsilon_a^2) = \exp\big(n\log(2)-\half k\epsilon_a^2\big)
\leq \delta/2.
\end{equation}
In order to satisfy $\epsilon_a + \epsilon_b\beta \leq \epsilon$ we
must choose $\epsilon_a \leq \epsilon/(1+\beta)$. Combined with a
union bound, obtained by multiplying the left-hand side
of~\eqref{Eq:ConditionDelta} by the cardinality of $\mathcal{I}$, this
gives the sample complexity stated in~\ref{Eq:SampleComplexity}.
\end{proof}

As an aside, we note that diagonalization of quantum noise channels
using Pauli twirls follows exactly the same principle as the one we
use for diagonalizing $M$. A simple modification of
Theorem~\ref{Thm:SampleComplexity} can then be used to determine the
number of circuits needed to ensure that all off-diagonal noise terms
are bounded by $\epsilon$.

\subsection{Example transition matrices}\label{Sec:ExampleMatrices}

For a given transition matrix $A$ we define a corresponding
transformed matrix $M = HAH^{-1}$. This can be seen as readout
transition matrix for \PauliZ\ operators. It is easily seen that
$M^{-1} = HA^{-1}H^{-1}$ whenever the inverse of $A$ exists. For
convex combinations of two error channels, namely
$A = \mu A_1 + (1-\mu)A_2$ with $\mu\in[0,1]$, we have
$M = \mu M_1 + (1-\mu)M_2$. This straightforwardly generalizes to the
convex combination of any number of transition matrices.

As a simple example of a transition matrix, consider the case where
the outcome of each qubit is independently flipped with some
probability $r$. The transition matrix for a single qubit is then
given by
\begin{equation}\label{Eq:BitflipChannel}
A_i = \left(\begin{array}{cc}1-r_i & s_i \\ r_i & 1-s_i\end{array}
\right),
\end{equation}
with $r=s$. These matrices are then combined into a global transition
matrix $A = A_{s_1}\otimes \cdots\otimes A_{s_n}$. The corresponding
Pauli readout transition matrix then has a particularly simple
structure:
\begin{equation}\label{Eq:BitFlipM}
HAH^{-1}
= \bigotimes_{\ell=1}^{n} (H_2 A_{r_{\ell}} H_2^{-1})
 = \bigotimes_{\ell=1}^{n} \left[
\begin{array}{cc} 1 & 0 \\ 0 & (1-2r_{\ell})
\end{array}\right]
\end{equation}
In this case, since $M$ is already diagonal, we do not need to shrink
the off-diagonal elements. It therefore suffices to choose an
arbitrary but fixed value for $q\in\mathcal{X}$ for the circuits,
rather than sample it.  Choosing $q=0$ simplifies the resulting
circuits. Assume for simplicity that all probabilities $r_{\ell}$ are
equal to $r$, then it follows from~\eqref{Eq:BitFlipM} that the
diagonal element $M_{i,i}$ is directly related to the weight of the
\PauliZ\ operator $P_i$. For each $\sigma_z$ term in $P_i$ we have a
multiplicative term $(1-2r)$. The diagonal term for $P_i$ with $k$
non-identity term is then given by $(1-2r)^k$. The term $(1-2r)^k$ is
bounded below by $1-2kr$, that means that for 30 qubits with $1\%$
probability of a measurement flip, the diagonal elements in $M$ are
still at least $0.4$.  In the noiseless case the bit flip probability
is zero and we obtain $A=M=I$.

The transition matrix for the case where we only measure zeros is
given by $A = e_0e^T$ with a corresponding matrix $M = ee_0^T$. In
case each outcome is measured with equal probability, regardless of
the state, we have $A = 2^{-n} ee^T$ and $M = e_0e_0^T$. Although not
realistic by themselves, these matrices could be used in convex
combinations with other transition matrices. A good, but not very
realistic, example of a transition matrix that is perfectly invertible
but provides difficulty for our method is the following permutation
matrix:
\[
A = \left[\begin{array}{cccc}
\cdot & \cdot & 1 & \cdot\\
1 & \cdot & \cdot &\cdot\\
\cdot & \cdot & \cdot & 1\\
\cdot & 1 & \cdot & \cdot
\end{array}\right],\qquad
HAH^{-1} = \left[\begin{array}{cccc}
1 & \cdot & \cdot & \cdot\\
\cdot & \cdot & -1 &\cdot\\
\cdot & 1 & \cdot & \cdot\\
\cdot & \cdot & \cdot & -1
\end{array}\right]
\]
If we have access to an approximate inverse $\hat{A}^{-1}$, we can
adjust our scheme to work with $D_qH\hat{A}^{-1}\tilde{p}$ instead of
$D_qH\tilde{p}$. This form of preconditioning could help increase the
magnitude of the diagonal elements in $M$ or reduce that of the
off-diagonal elements but may be computationally expensive.

\subsection{Practical considerations}

In most of the discussion so far we have assumed ideal state
preparation. Suppose that, instead of $\rho_0 = \ket{0}\bra{0}$, we
can only prepare $\tilde{\rho}_0$. For calibration this means that,
after diagonalization of $M$, we obtain the vector
\begin{equation}\label{Eq:NoisyRho0}
(M \odot I)Z(\hat{\rho}_0) = \diag(Z(\hat{\rho}_0))m
\end{equation}
rather than $m$. If we assume that state preparation for qubits it
independent and that each qubit $\ell$ is initialized to state
$(1-\alpha_{\ell})\ket{0}\bra{0}+\alpha\ket{1}\bra{1} = \half(I +
(1-2\alpha_{\ell})\sigma_z)$, then we have
\[
Z(\hat{\rho}_0)^T = \bigotimes_{\ell}\left(\begin{array}{c}1\\
                            1-2\alpha_{\ell}\end{array}\right).
\]
Under this assumption, that means that, if we can estimate the
$\alpha_{\ell}$ values, we can incorporate this information
in~\eqref{Eq:NoisyRho0} to better estimate $m$.

Once the calibration data set has been acquired it can be used to
mitigate readout errors for circuits with various $U$, possibly with
basis changes. In practical systems we can expect gradual changes in
systemic gate and readout errors. That means that calibration data has
a limited lifetime. For error mitigation in the proposed approach we
traverse the calibration data whenever we want to compute the
correction factor for an individual \PauliZ\ operator. This approach
makes updates to the calibration data set very light weight: we could
simply augment the calibration data with time stamps and periodically
add some new data points while retiring data that falls outside the
current time window. For approaches based on explicit inversion of the
transfer matrix, any such update would amount to regeneration of the
entire matrix and its inverse. The computation complexity for updating
the correction factor using~\eqref{Eq:Processing} is linear in the
size of the data set. The evaluation of an element in the Hadamard
matrix and commutation between two $n$-qubit Pauli operators both take
$\mathcal{O}(n)$ time.

Note that the scalar terms $H_{m,i}$ in~\eqref{Eq:Processing} could be
replaced by elements from any other matrix, say $G$ with
$\vert G_{m,i}\vert \leq 1$, provided that the (relevant) diagonal
elements of $GAH^{-1}$ are sufficiently large.

As mentioned in Section~\ref{Sec:Analysis}, we can only access
information about $M$ by sampling from
$\tilde{p} = AH^{-1}D_qZ(\rho_0)$ for each instance $q$. In practice
we therefore need to make a tradeoff between the number of circuit
instances and the number of samples per circuit. We leave a detailed
analysis of this for future work.

\begin{figure*}[t!]
\centering
\begin{tabular}{ccccc}
\includegraphics[height=102pt]{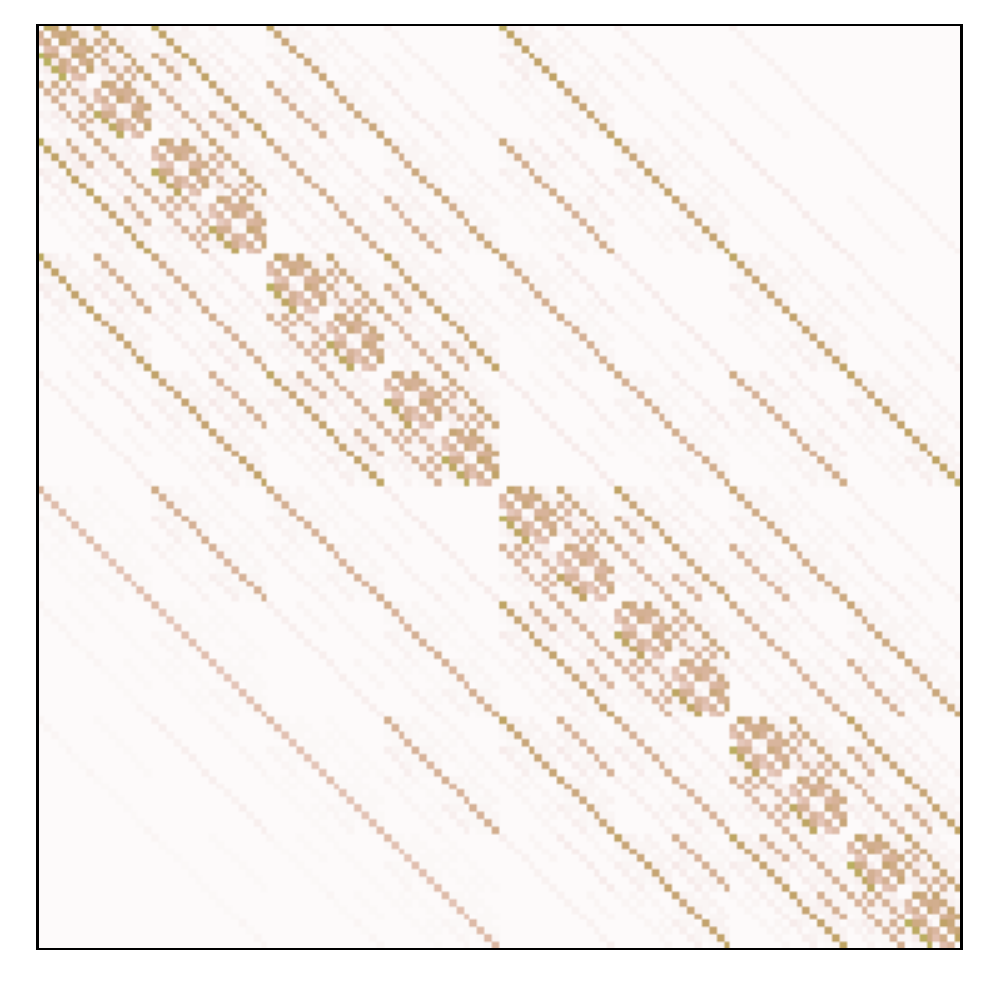}&
\includegraphics[height=102pt]{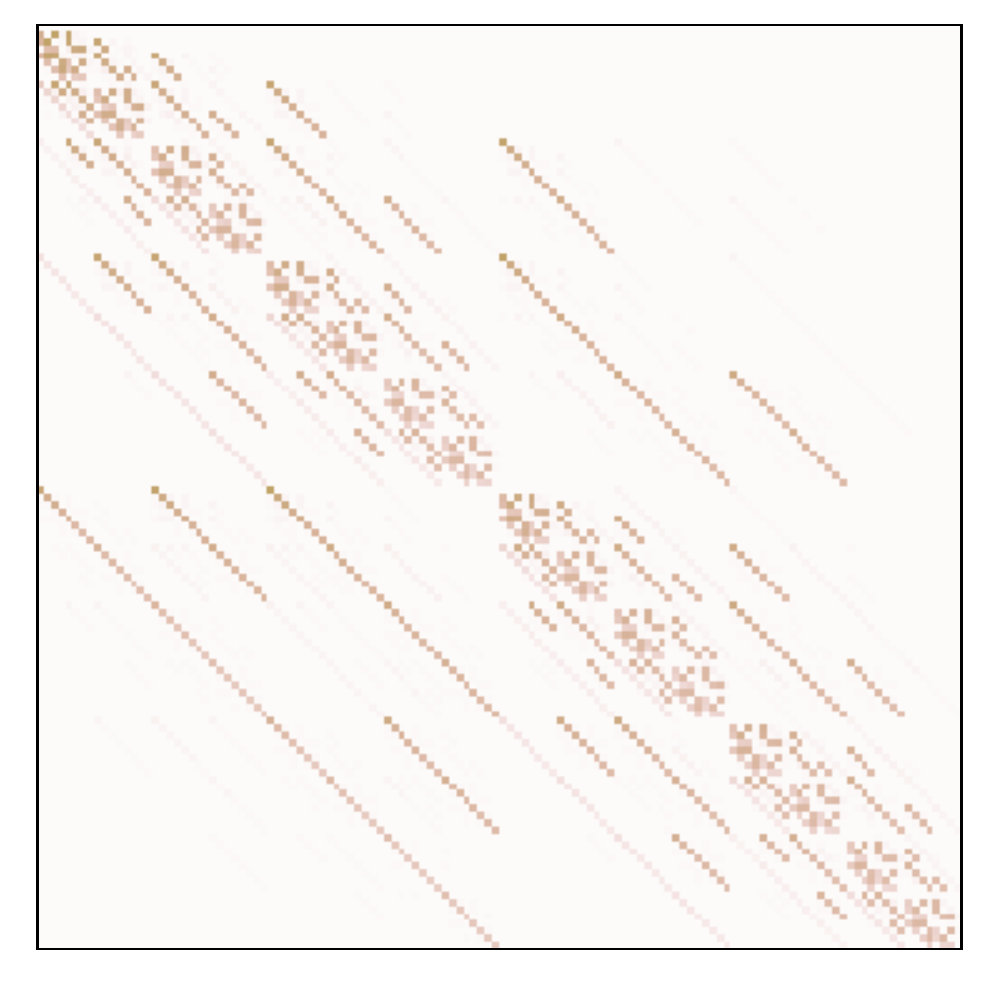}&
\includegraphics[height=102pt]{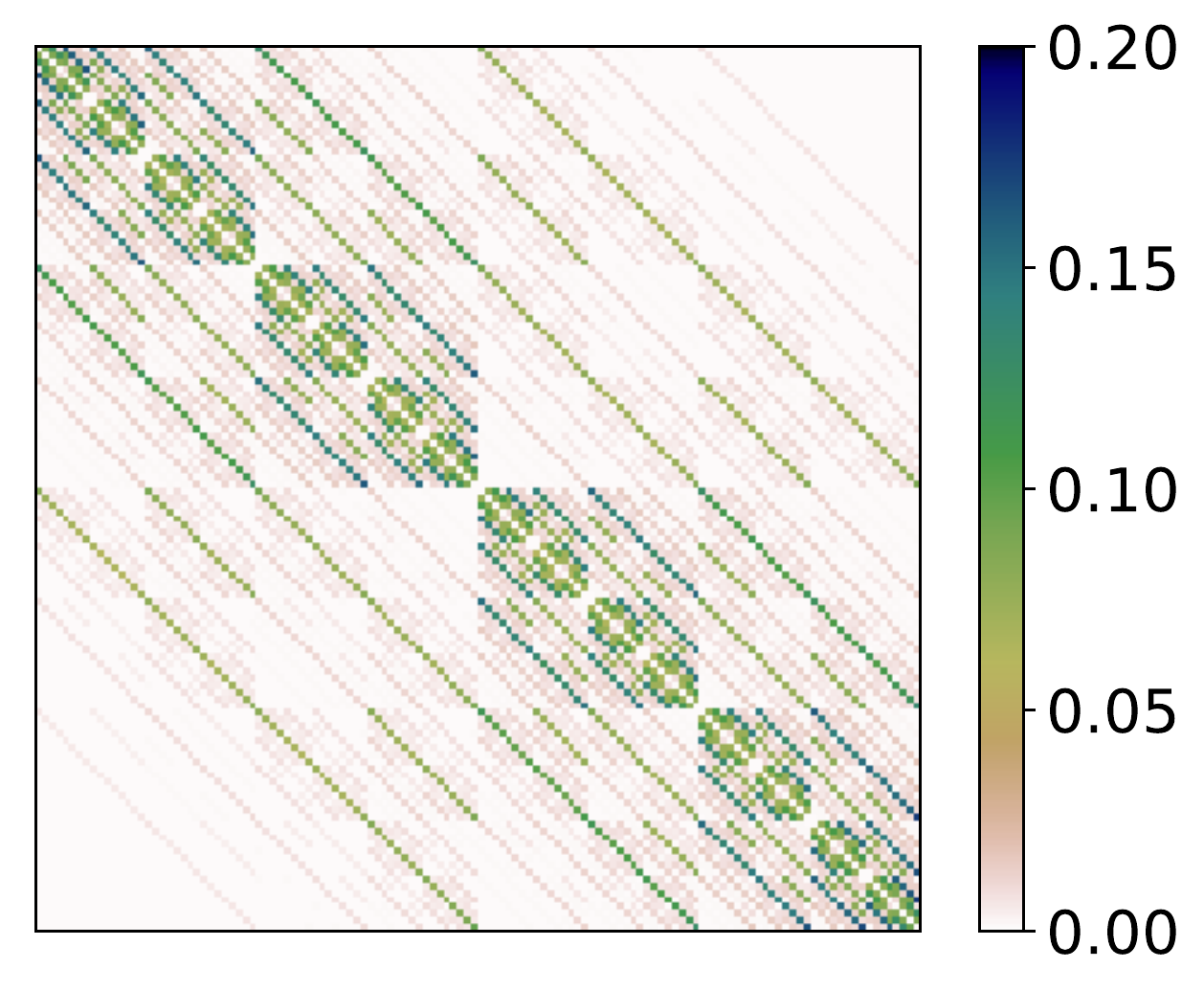}& \hspace*{8pt}&
\includegraphics[height=102pt]{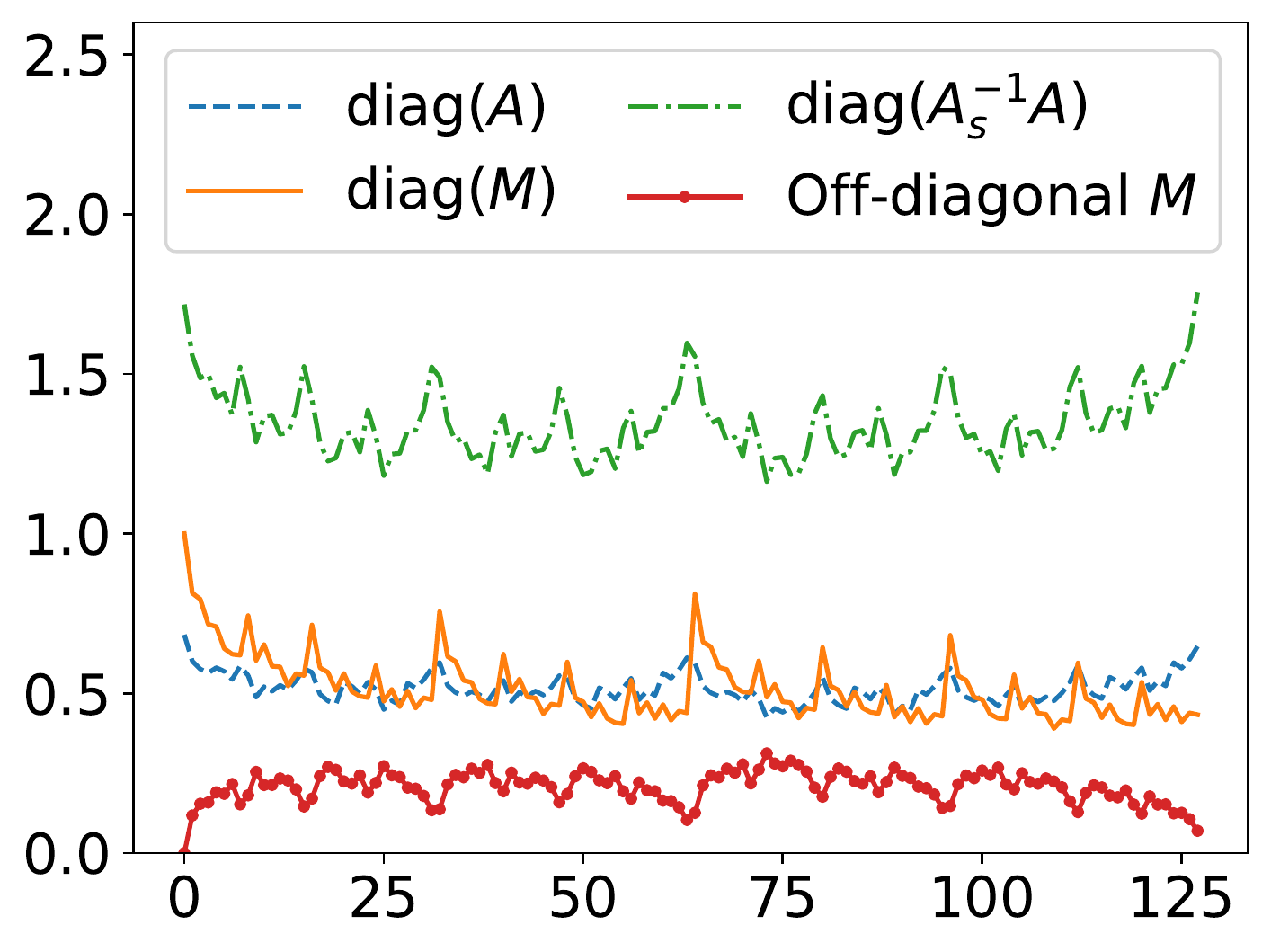}\\
({\bf{a}}) & ({\bf{b}}) & ({\bf{c}}) && ({\bf{d}})
\end{tabular}
\caption{Magnitude of the off-diagonal matrix entries of (a)
  transition matrix $A$, (b) matrix $M = HAH^{-1}$, and (c) matrix
  $A_s^{-1}A$, where $A_s$ is the single-qubit bit-flip model of $A$.
  We zero out the diagonal elements to highlight the off-diagonal
  structure and relative magnitude of the elements; the diagonal
  matrix elements are shown in plot (d) along with the sum of absolute
  values of the off-diagonal elements in $M$. If the transition matrix
  $A$ did not include any correlated readout errors, the matrix
  $A_s^{-1}A$ would be the identity matrix with unit diagonal
  entries.}\label{Fig:ErrorMatrices}
\end{figure*}

\begin{figure*}
\centering
\begin{tabular}{ccc}
\includegraphics[height=116pt]{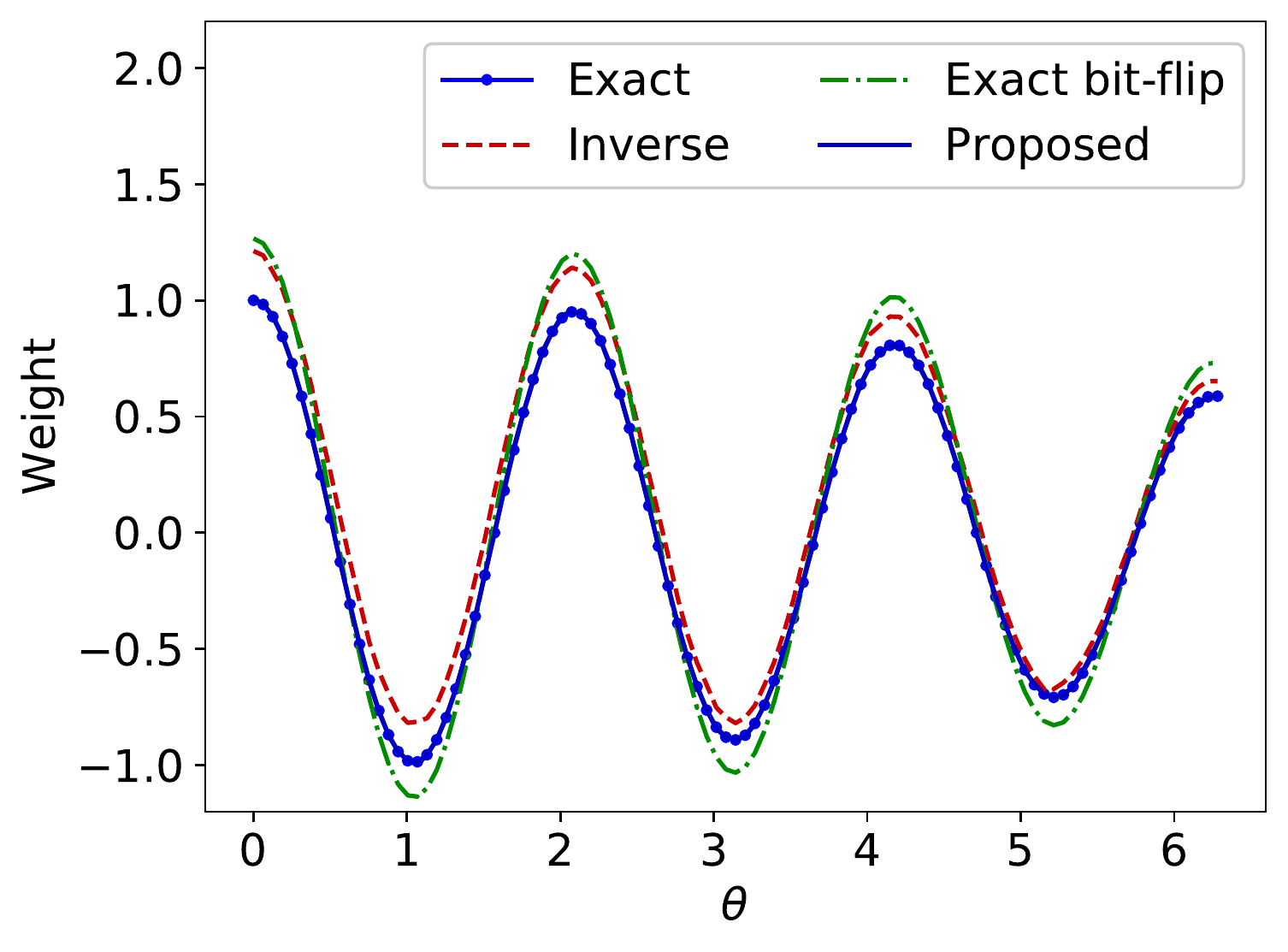}&
\includegraphics[height=116pt]{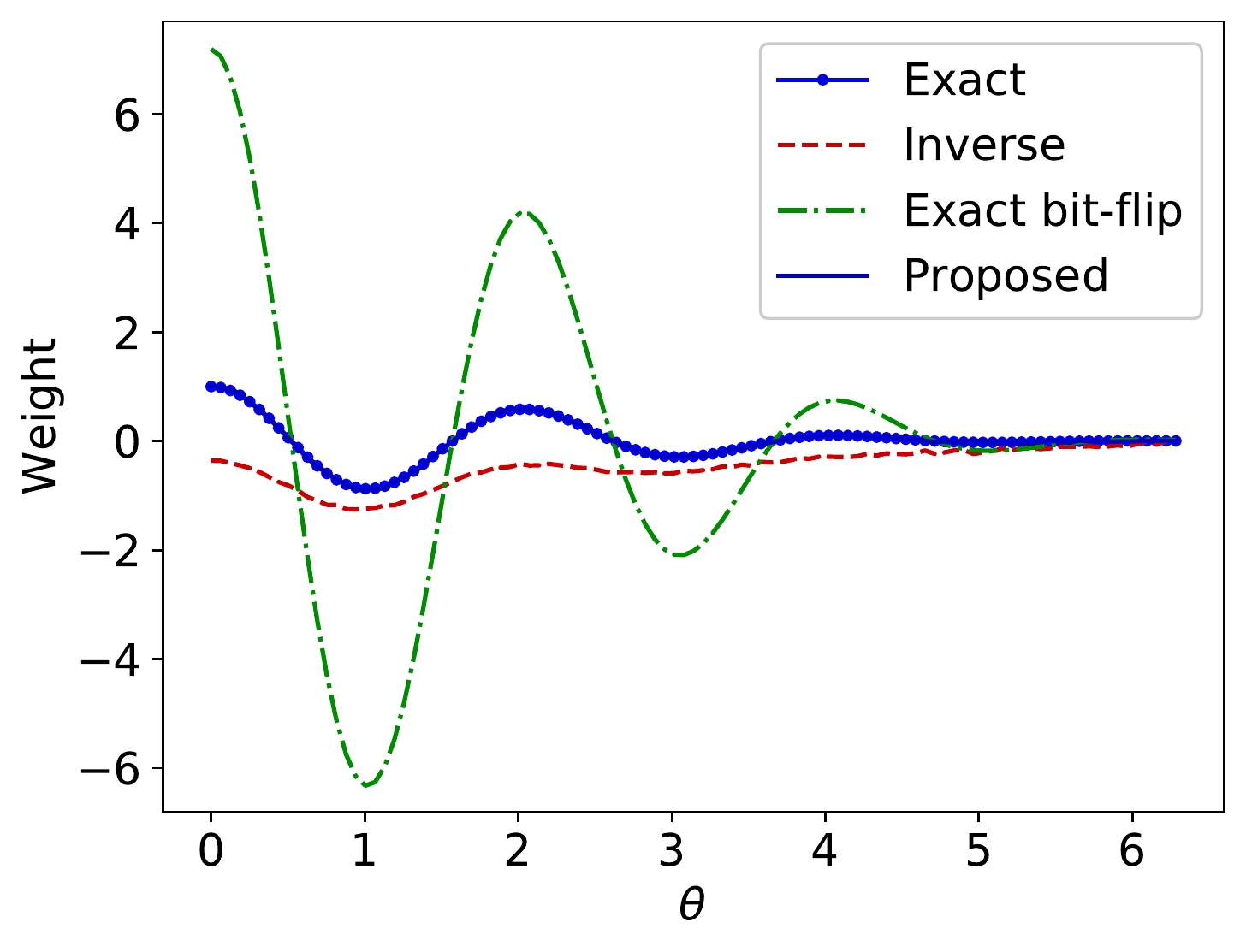}&
\includegraphics[height=116pt]{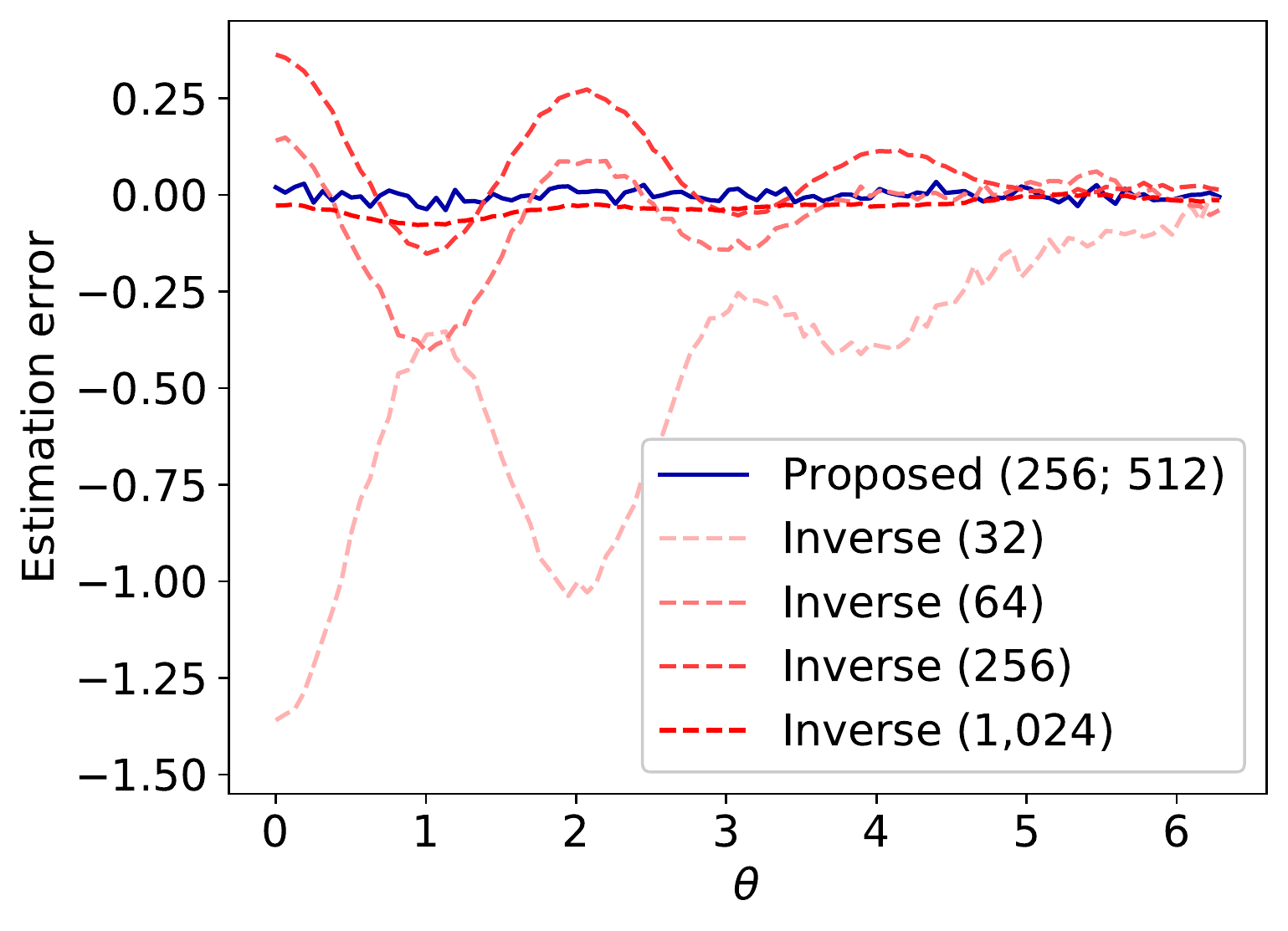}\\[-3.5pt]
({\bf{a}}) & ({\bf{b}}) & ({\bf{c}})\\[8pt]
\includegraphics[height=116pt]{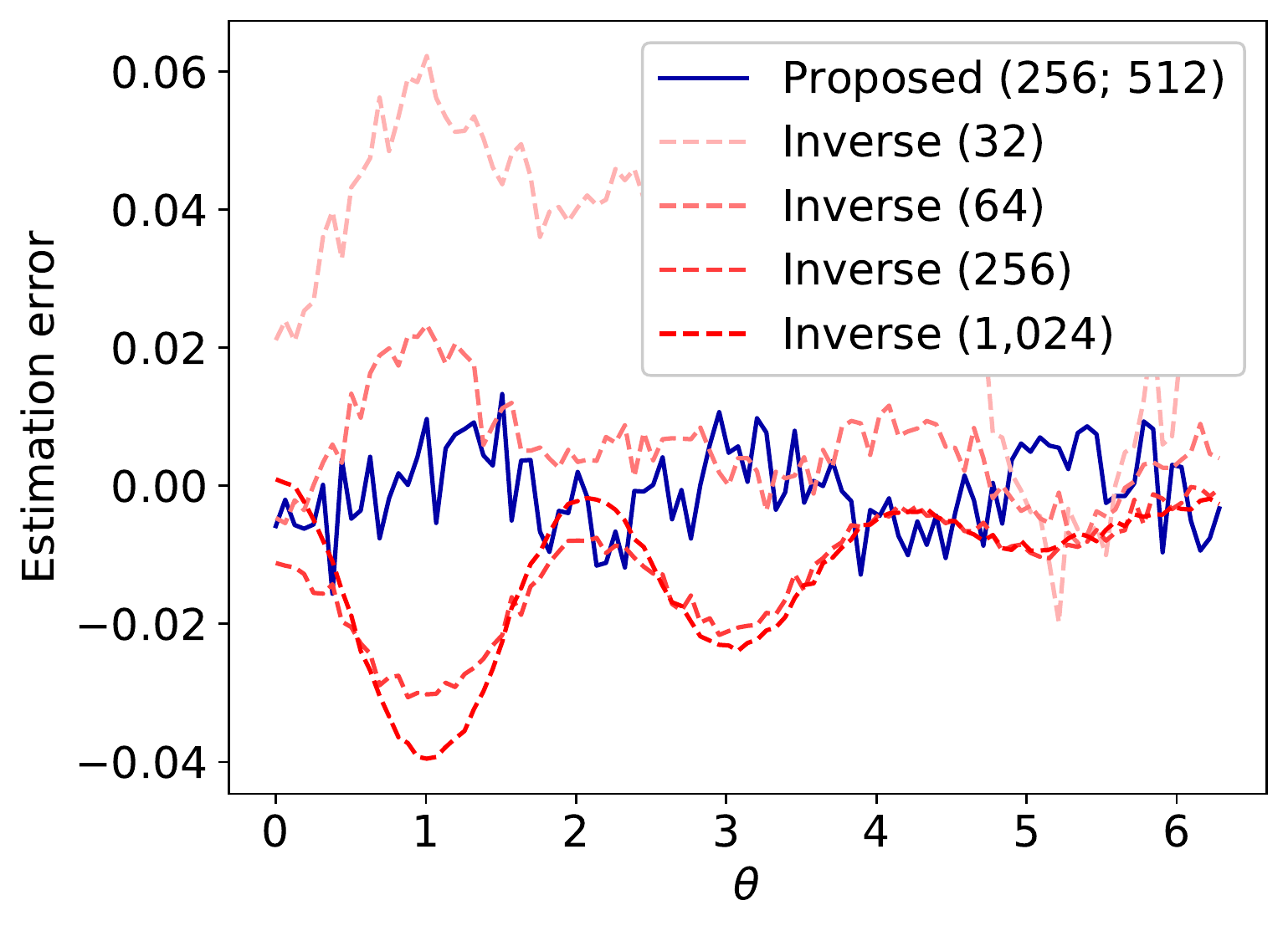}&
\includegraphics[height=116pt]{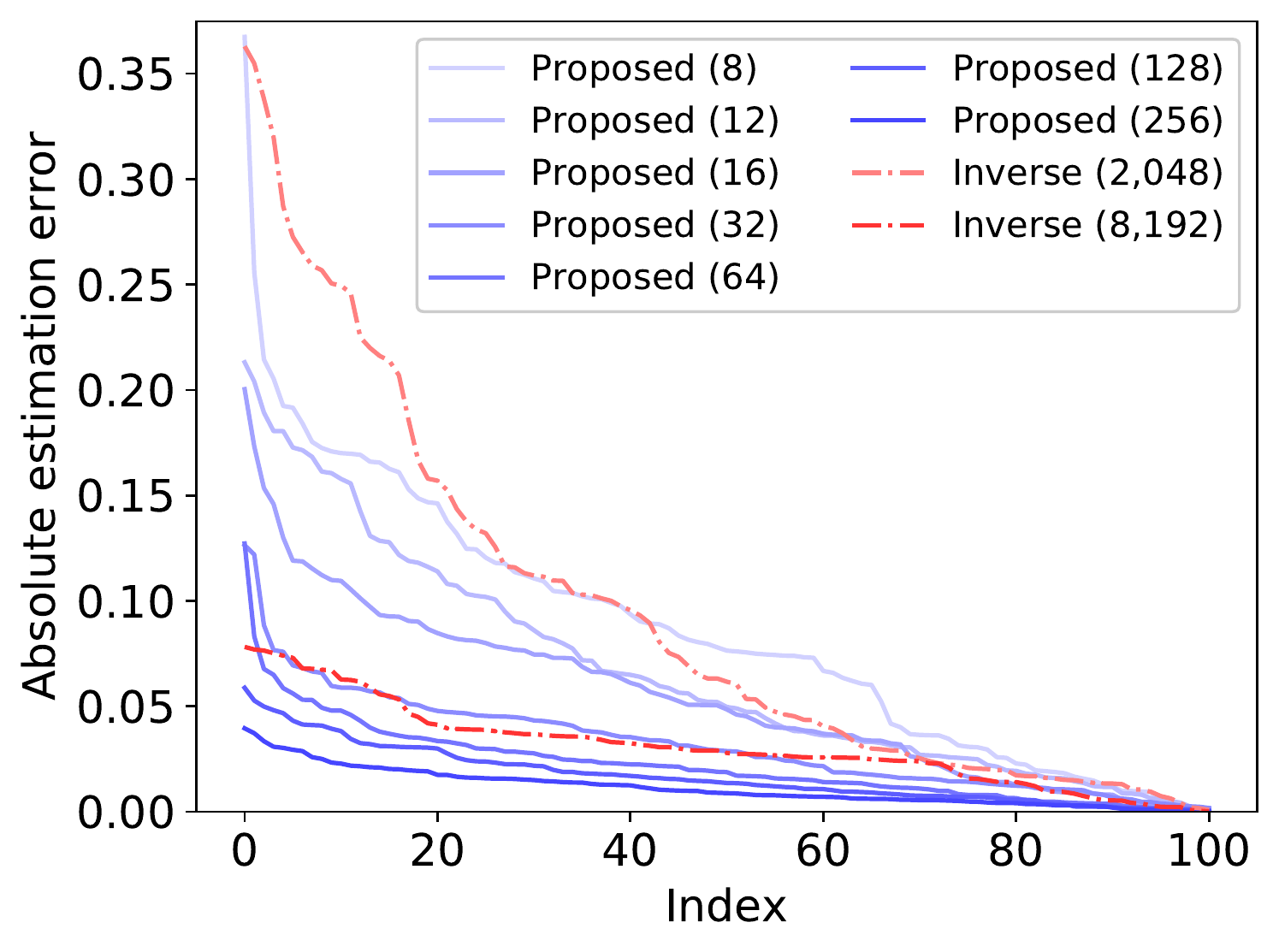}&
\includegraphics[height=116pt]{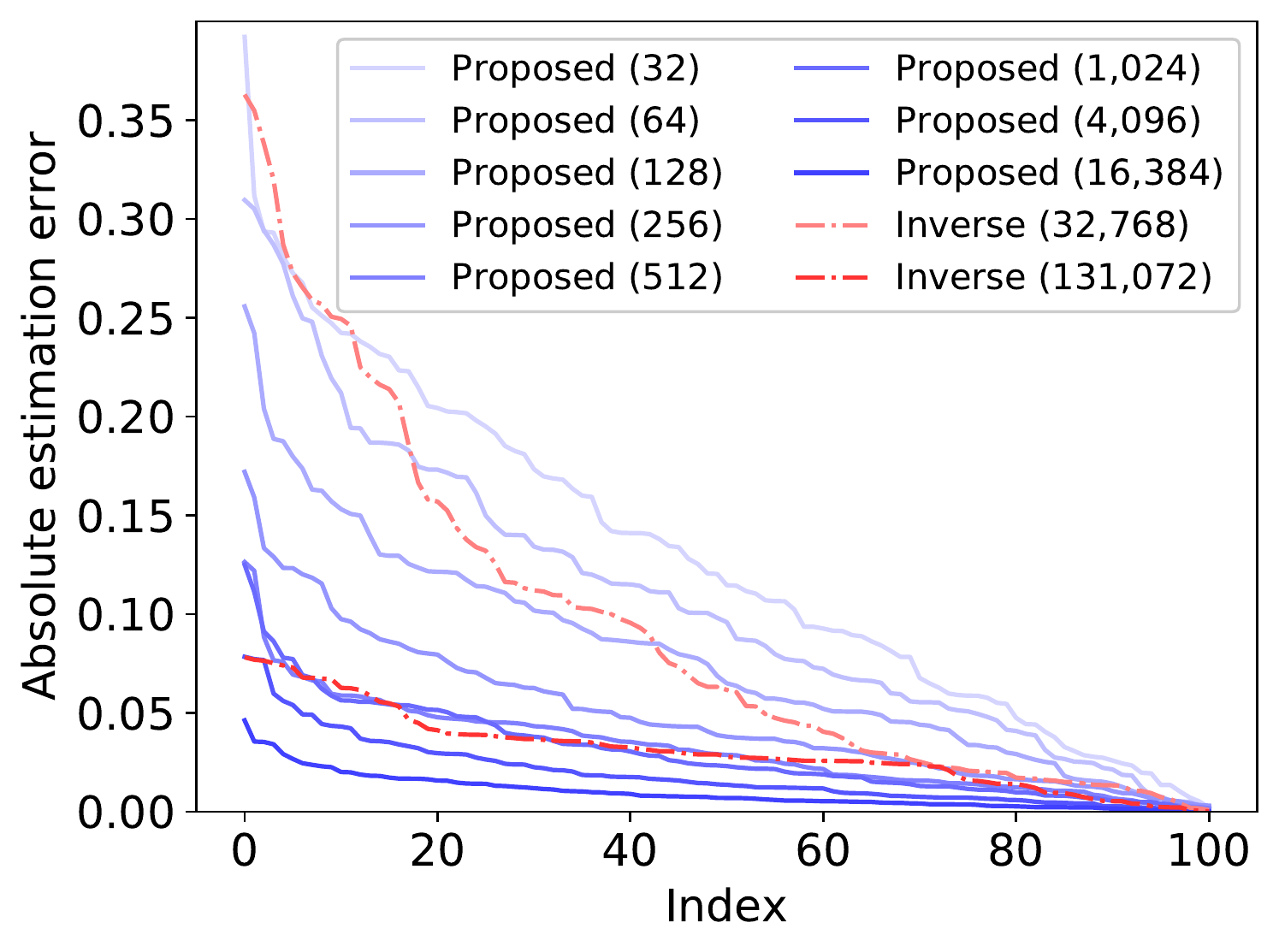}\\[-3.5pt]
({\bf{d}}) & ({\bf{e}}) & ({\bf{f}})
\end{tabular}
\caption{Simulation results for exact and estimate weights for
  \PauliZ\ operators in a twelve-qubit system with (a) a single
  $\sigma_z$ term on the first qubit, and (b) all $\sigma_z$ terms
  using 256 random circuits and 512 measurement per circuit for the
  proposed method and the same total number of measurements for the
  matrix-inversion approach. We evaluate the method for a range of
  $\theta$ values, resulting in different solutions as given
  by~\eqref{Eq:Weights}.  Plots (c) and (d) give the estimation error
  for the Pauli operators in (b) and (a), respectively, for easy
  comparison. The proposed method again uses 256 random circuits with
  512 measurements each. For the matrix-inversion approach we sample
  each of the $2^{12}$ circuits 32 to 1,024 times. Keeping the number
  of measurements fixed to 512, plot (e) shows the sorted
  approximation errors over the $\theta$ value, for different numbers
  of circuits. For the inverse approach we show the equivalent number
  of circuits such that the product gives the total number of
  measurements. Plot (f) keeps number of circuits in the proposed
  approach fixed to 32 and varies the number of measurements per
  circuit. For the inverse approach we list the equivalent number as
  before.}\label{Fig:Results}
\end{figure*}

\section{Simulation}\label{Sec:Simulation}

We evaluate the performance of the proposed method on a simple quantum
circuit consisting of a single $R_y(\alpha_i \theta)$ gate on each
qubit, where $\alpha_i$ is a scaling parameter that differs for each
qubit, and $\theta$ is a global phase. We choose $\alpha_1 = 3$ and
$\alpha_i = 0.15$ for all remaining qubits. The Pauli transfer matrix
for $R_y(\theta) = \mathrm{exp}(-i\theta Y/2)$ is given by
\[
T_{R_y(\theta)} =
\left(\begin{array}{cccc}
1 & 0 & 0 & 0 \\
0 & \cos(\theta) & 0 & -\sin(\theta)\\
0 & 0 & 1 & 0\\
0 & \sin(\theta) & 0 & \cos(\theta)
\end{array}\right).
\]
When applied to a \PauliZ\ operator with $\sigma_z$ components for
qubits $i\in\mathcal{I}$, the final weight is given by
\begin{equation}\label{Eq:Weights}
\prod_{i\in\mathcal{I}} \cos(\alpha_i\theta)
\end{equation}
We simulate noisy readout by forming transition matrix $A$ with
individual asymmetric bit-flip channels with a larger weight for $1$
to $0$ transitions. The transition matrix additionally includes
correlated readout errors on pairs of qubits. In
Figures~\ref{Fig:ErrorMatrices}(a)--(c) we illustrate a seven-qubit
transition matrix $A$, the corresponding $M = HAH^{-1}$, as well as
the product $A_s^{-1}A$, where $A_s$ is the transition matrix that we
would obtain if we would determine the exact bit-flip frequencies for
each of the qubits, and form the corresponding transition matrix. The
diagonal terms of the aforementioned three matrices, along with the
sum of absolute values of the off-diagonal elements in $M$, are
plotted in Figure~\ref{Fig:ErrorMatrices}(d).

As a first experiment we consider a 12-qubit system. For the proposed
method we sample 256 circuits each with 512 measurements.  For matrix
inversion we take the same total number of measurements, but spread
out over each of the 2,048 columns, each corresponding to a unique
circuits, thus giving a maximum of 64 measurements per circuit. The
resulting estimates for the weights of \PauliZ\ operators with
$\sigma_z$ at the first qubit, respectively all qubits are shown in
Figures~\ref{Fig:Results}(a) and~\ref{Fig:Results}(b) for a range of
$\theta$ values in~\eqref{Eq:Weights}. The estimates obtained using
the proposed approach are very close to the exact solution; so close
in fact that the curves are hard to distinguish. Given that the
transition matrix contains correlated noise terms, the exact bit-flip
approximation can never exactly mitigate the readout noise, as seen
from the rather poor performance. Finally, the results based on the
inverse of the estimated transition matrix $\hat{A}$ appear to be
biases in both settings and relatively lead more accurate, but still
nowhere near the performance of the proposed method. By increasing the
number of samples per circuit we can improve the accuracy of the
estimates, as illustrated in Figure~\ref{Fig:Results}(c)
and~\ref{Fig:Results}(d). However, even with 32 times more
measurements, the results obtained using matrix inversion are still
not as accurate as those obtained using the proposed method. Note that
the estimation error for the low-weight Pauli operator in
Figure~\ref{Fig:Results}(d) is much lower the the weight-$n$ Pauli.

In the next set of experiments we fix the number of circuits for
the proposed method to 512 and vary the number of measurements per
circuit instance. We sort the resulting estimation errors in magnitude
for the different $\theta$ values and plot the result in
Figure~\ref{Fig:Results}(e). We compare the results with those
obtained using the matrix-inversion approach, and display the
equivalent number of measurements per circuits. Forming the full
matrix requires $2^{12} = 2,048$ circuits, which means that the actual
number of measurements per circuit is four times lower than the number
shown. To match the performance of the proposed method with 32 samples
per circuit, the matrix-inversion approach requires an equivalent of
8,192 samples per circuit, which is 128 times more measurements in
total.  Similarly,
In Figure~\ref{Fig:Results}(f) we compare the
performance of the methods by fixing the number of circuits to 32 and
varying the number of measurements per circuit. 

\section{Conclusions}
In this work we have proposed an efficient, yet exceedingly simple,
method for readout error mitigation in the estimation of Pauli
observables. Unlike most existing techniques, the proposed approach
does not require any {\it{a priori}} assumptions or model of the
readout-error process. The approach is based on the augmentation of
quantum circuits with randomly selected Pauli operators and the
evaluation of a scalar function based on the measurements obtained
using a series of random instances. Readout errors are then mitigated
simply by dividing the function value for the quantum circuit of
interest by that of the benchmark circuit. The approach works by
diagonalizing the readout-error transfer matrix in the Hadamard
domain, which makes it trivial to invert. In contrast to many of the
existing algorithms, the proposed approach directly estimates the
weight of the \PauliZ{} components in the state, rather than the
distribution of the distribution of measurement values. Our
simulations show that the method is capable of mitigating correlated
readout error in a twelve-qubit system with very few measurements and
circuit instances.

\smallskip After completion of the manuscript we became aware of the
independent work by Chen {\it{et al.}}~\cite{CHE2020YZFa-arXiv} that
contains similar ideas as those presented here.


\begin{acknowledgments} 
  We thank Sergey Bravyi for helpful comments and discussion. This
  work is supported by the IBM Research Frontiers Institute.
\end{acknowledgments}


\bibliography{bibliography}
\end{document}